\documentclass{elsarticle}

\usepackage{hyperref}
\usepackage{xcolor,amssymb}
\usepackage[utf8]{inputenc}

\usepackage{booktabs}
\usepackage{multirow}
\usepackage{color}
\usepackage{changes}

\journal{arXiv}
\bibliographystyle{elsarticle-num}


\usepackage[english]{babel}	
\usepackage{microtype}

\usepackage[T1]{fontenc}
\usepackage[utf8]{inputenc}

\usepackage{amsmath,amssymb,amsthm}
\usepackage{mathtools}

\usepackage[linesnumbered,ruled,vlined]{algorithm2e}

\newcounter{cntr}
\setcounter{cntr}{0}
	
\theoremstyle{definition}

\newtheorem{remark}[cntr]{Remark}
\newtheorem{problem}[cntr]{Problem}

\theoremstyle{plain}
\newtheorem{theorem}[cntr]{Theorem}
\newtheorem{lemma}[cntr]{Lemma}

\newtheorem{algo}[cntr]{Algorithm}

\newcommand{\bb}[1]{\ensuremath{\mathbf{#1}}}
\newcommand{\C}{\mathbb{C}}
\newcommand{\R}{\mathbb{R}}
\newcommand{\curl}{{\pmb{\operatorname{curl}}}{}}
\renewcommand{\div}{\operatorname{div}{}}
\newcommand{\grad}{\operatorname{grad}{}}
\newcommand{\MSL}{{\pmb {\mathcal V}_\kappa}{}}
\newcommand{\MSLP}{{\pmb {V}_\kappa}{}}
\newcommand{\MSLh}{{\pmb {\mathrm V}_{\kappa}}{}}

\usepackage{pgfplots}
\usepackage{pgfplotstable}
\usepackage{amsmath}


\newcommand{\convfitorder}[4]{
		\pgfplotstableread{#2}\meta
		\pgfplotstablegetrowsof{\meta}
		\pgfmathsetmacro{\N}{\pgfplotsretval-1}
		\pgfplotstablegetelem{\N}{M}\of{\meta}
		\let\lastM\pgfplotsretval
		\pgfplotstablegetelem{\N}{#1}\of{\meta}
		\let\lasterror\pgfplotsretval
		
		\addplot[thick,dashed,mark = none] table [trim cells=true,x=M,y expr=2^(-#3*(\thisrow{M}-\lastM))*\lasterror*1.3] {#2};
		\addlegendentry{#4}
	}

\pgfplotsset{width=.8\textwidth}

\pgfplotsset{compat=newest}
\pgfplotsset{plot coordinates/math parser=false} 

\usetikzlibrary{external, calc}
\usepgfplotslibrary{colorbrewer}

\pgfplotsset{colormap/Dark2}
\pgfplotsset{
	cycle list/Dark2,
	cycle multiindex* list={
		mark list*\nextlist
		Dark2\nextlist
	},
}

\pgfplotsset{%
		lnc/.style={%
			line width=1.2,
			mark={},
			},
	}

\usepackage{expl3}
\ExplSyntaxOn
\cs_set_eq:NN \fpeval \fp_eval:n
\ExplSyntaxOff



\begin{document}
\begin{frontmatter}
\title{Solving Maxwell’s Eigenvalue Problem via Isogeometric Boundary Elements and a Contour Integral Method}
\author[aff1]{Stefan~Kurz}
\ead{kurz@gsc.tu-darmstadt.de}
\author[aff1]{Sebastian~Schöps}
\ead{schoeps@temf.tu-darmstadt.de}
\author[aff2]{Gerhard~Unger}
\ead{gunger@math.tugraz.at}
\author[aff1]{Felix~Wolf{\,}\corref{corr}}
\ead{wolf@temf.tu-darmstadt.de}
\address[aff1]{Technische Universität Darmstadt, Centre of Computational Engineering}
\address[aff2]{Graz University of Technology, Institute of Applied Mathematics}
\cortext[corr]{Corresponding author}
\begin{abstract}
We solve Maxwell's eigenvalue problem via isogeometric boundary elements and a contour integral method. We discuss the analytic properties of the discretisation, outline the implementation, and showcase numerical examples.  
\end{abstract}
\begin{keyword}
IGA \sep BEM \sep Eigenvalue Problem \sep Cavity Problem
\MSC[2010]
34L16\sep35P30\sep65N38\sep65D07
\end{keyword}
\end{frontmatter}

\section{Motivation}

The development and construction of particle accelerators is arguably one of the most time and money consuming research projects in modern experimental physics. 
A cause for this is that essential components are not available off the shelf and must be manufactured uniquely tailored to the design specification of the planned accelerator.

One of the most performance critical components are so-called \emph{cavities}, resonators often made out of superconducting materials in which electromagnetic fields oscillate at {radio frequencies. The resonant fields are then used to accelerate bunches of particles up to speeds close to the speed of light.}
The geometry, and consequently the resonance behaviour of these structures is vital to the overall performance of the accelerator as a whole. 

Due to the expensive (e.g. superconducting) materials and vast amounts of manual labor that are needed in the manufacturing of these devices, the design of cavities has become its own area of research, cf.~\cite{add_Tesla} and the sources cited therein.
Consequently, the development of simulation tools specifically for this purpose became an important part of the related scientific advancement, see e.g.~\cite{Halbach_1976aa,Rienen_1985aa,Weiland_1985aa}.

 \begin{figure}\centering
        \begin{tikzpicture}[scale=1]
            \node (A) at (.5,0) {\includegraphics[width=5cm]{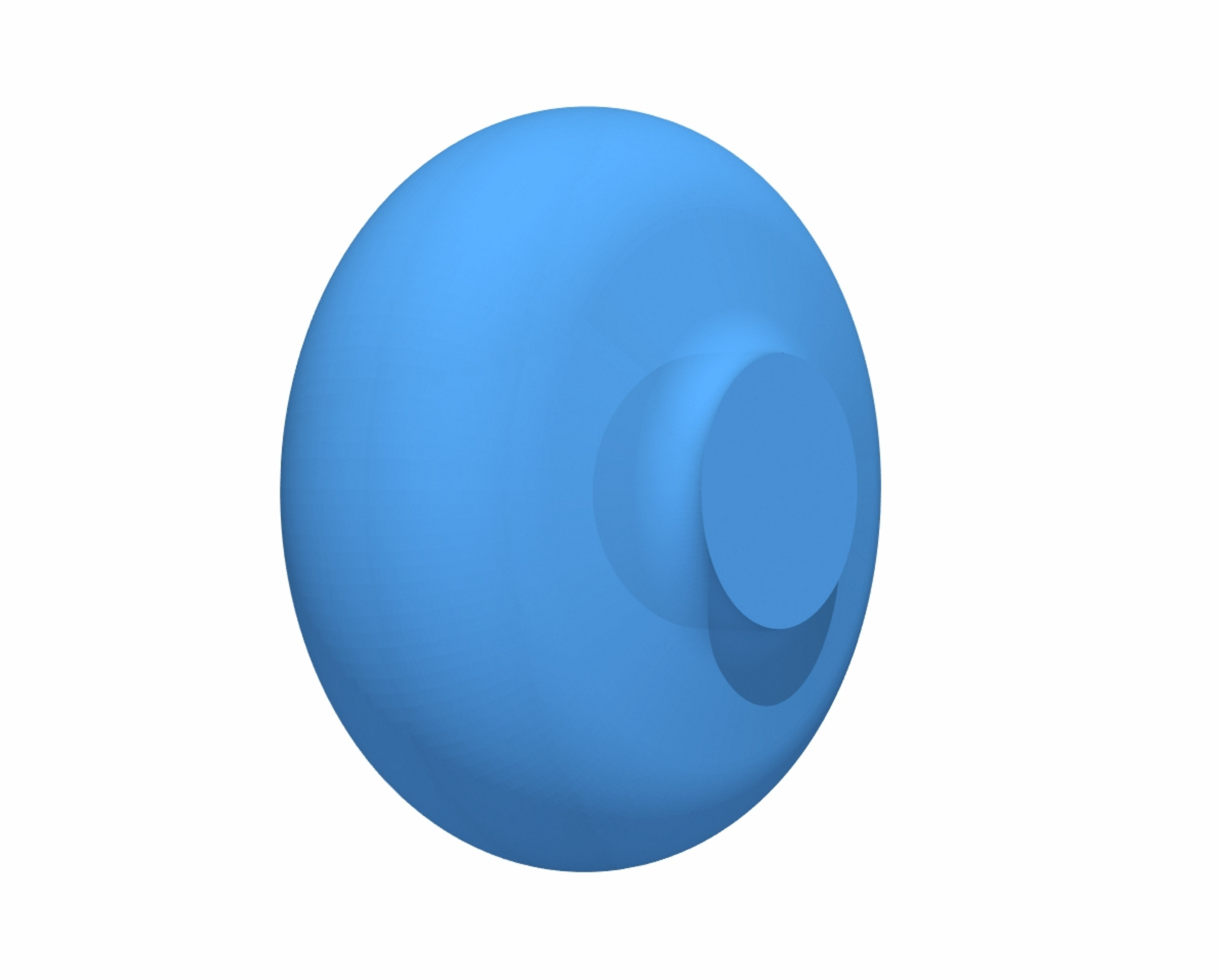}};
            \node (C) at (4.25,0) {\includegraphics[width=5cm]{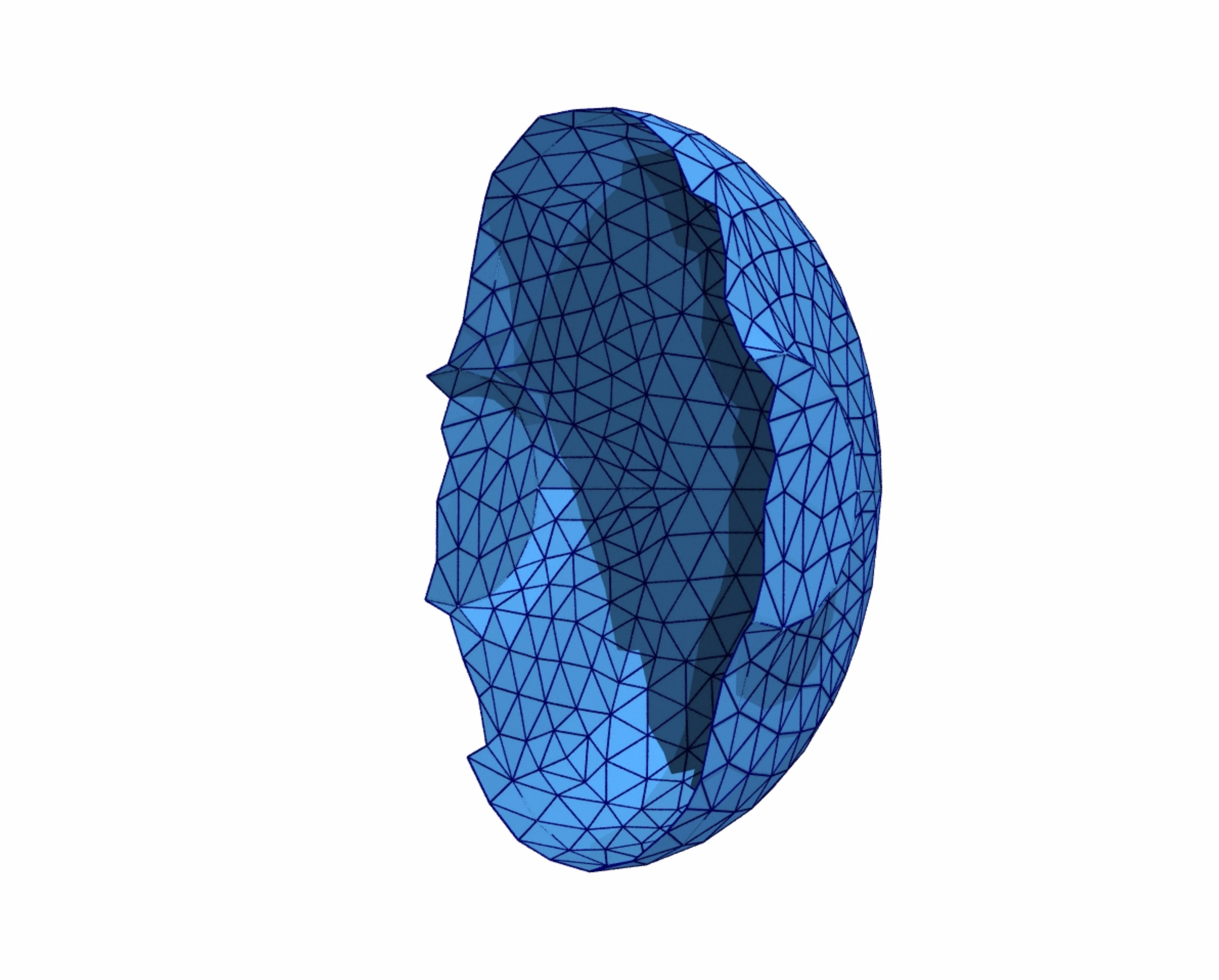}};
            \node (D) at (8,0) {\hspace{.2cm}\includegraphics[width=5cm]{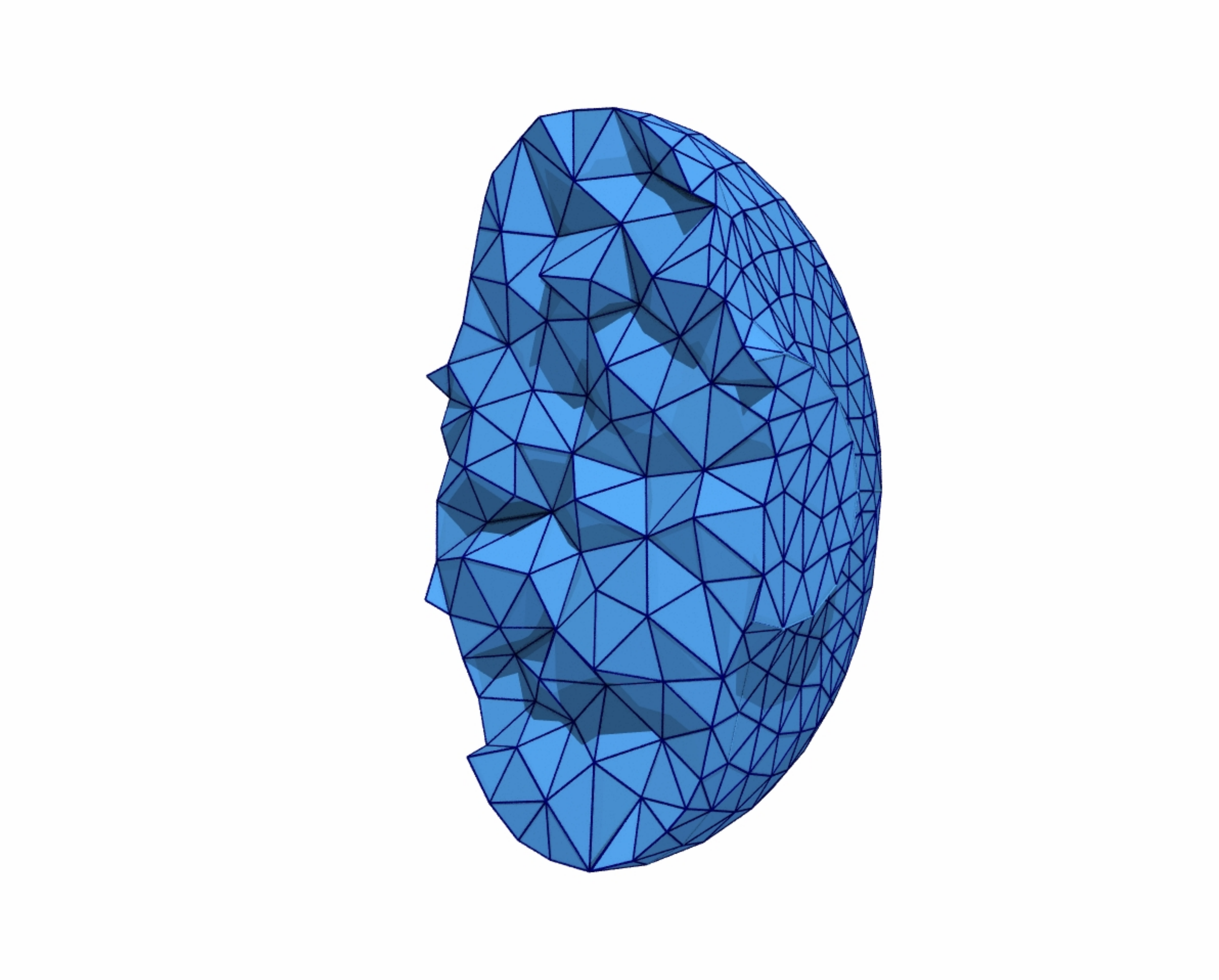}};
            \draw [-latex] (2.05,0) -- (3.05,0);
            \draw [-latex] (5.75,0) -- (6.75,0);
        \end{tikzpicture}
        \caption{For a finite element computation a mesh is generated from the boundary data available from the design framework. Afterwards, a volume mesh is created. This introduces geometrical errors and limits the obtainable order of convergence to that of the geometry representation. Graphics from \cite{Wolf2020}.}\label{fig::intro::tesla1}
    \end{figure}

A bottleneck with these classical approaches has always been the representation of the geometry, which often limits the achievable accuracies, cf. Figure~\ref{fig::intro::tesla1}. {However, high accuracies are desired such that the initial design and its simulation are not the weakest link within the manufacturing pipeline.}
As an example, manufacturers alone are interested in the simulation of deformation effects, the so-called \emph{Lorentz detuning}, which are dependent on a relative error margin of roughly $10^{-7}$, \cite[Tab. II]{add_Tesla}. There are also other, more advanced applications which have such high demands on accuracy. One is presented by Georg et al.~\cite{Georg2019}, {who show that even higher accuracies than those already achievable are required to simulate eccentricities.} 

{\section{Introduction}}

Nowadays, isogeometric analysis \cite{Hughes_2005aa} has been established as the method of choice when dealing with such high demands on accuracy w.r.t.~geometry representation. Isogeometric methods are well understood for the case of electromagnetism \cite{Buffa_2013aa,Bontinck_2017ag} and a corresponding finite element approach has already been applied to Maxwell's eigenvalue problem \cite{Corno_2016aa}.

 \begin{figure}\centering
        \begin{tikzpicture}[scale=1]
            \node (A) at (0,0) {\includegraphics[width=5cm]{figures/1.pdf}};
            \node (B) at (4,0) {\hspace{-.2cm}\includegraphics[width=5cm]{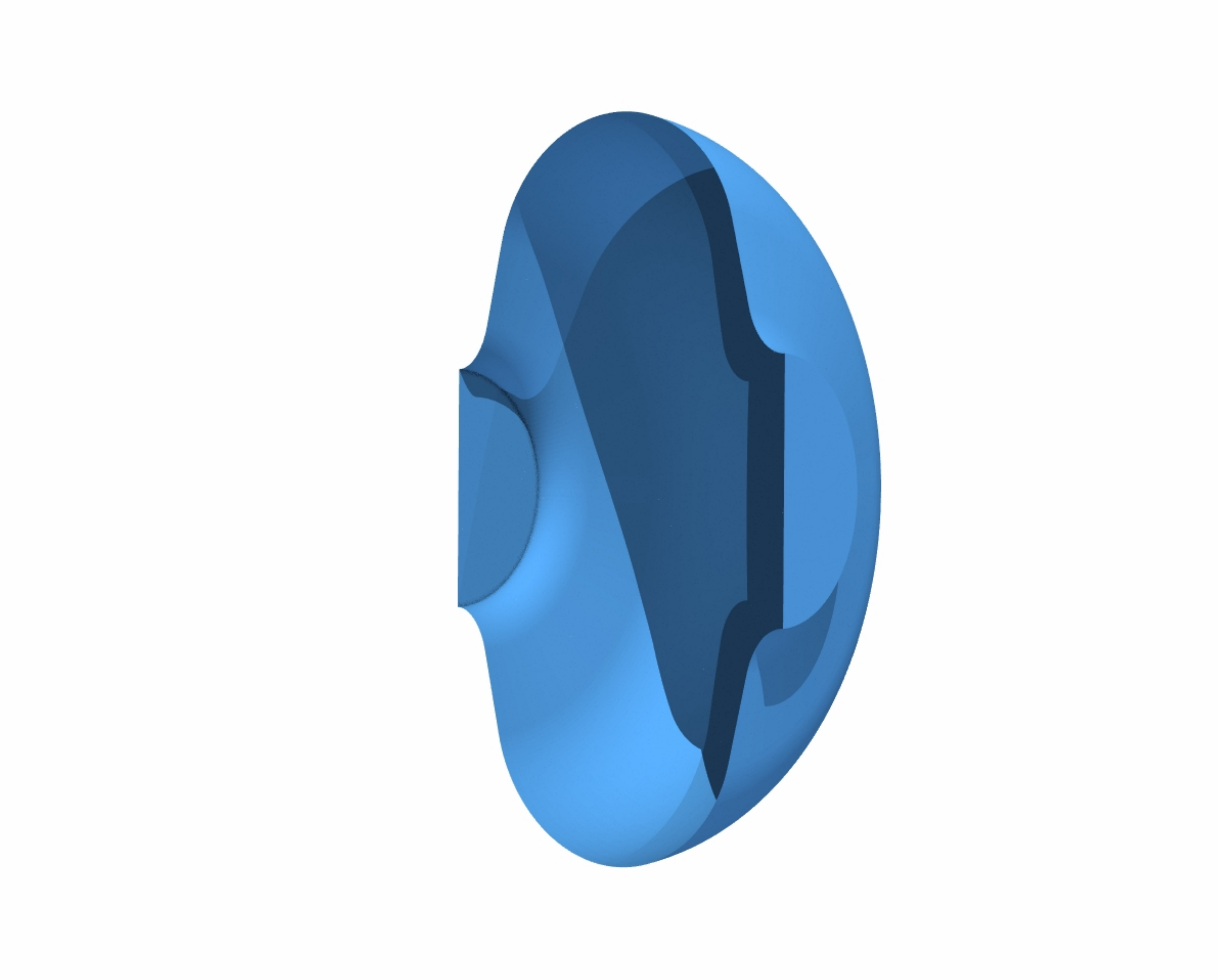}};
            \draw [-latex] (1.7,0) -- (2.7,0);
        \end{tikzpicture}
        \caption{Isogeometric boundary element methods enable the computation directly on the CAD representation. Graphics from \cite{Wolf2020}.}\label{fig::intro::tesla2}    
    \end{figure}

The boundary-only representations in modern CAD frameworks, as well as the demand for highly accurate simulation techniques, suggests the transition to isogeometric boundary element methods for these types of problems, cf.~Figure~\ref{fig::intro::tesla2}.
While the use of boundary element methods promises to both reduce the number of degrees of freedom w.r.t.~the element size $h$ drastically, and at the same time double the rate of convergence {for the point evaluation of the solution in the domain}~\cite[Cor.~3.11 \& Rem.~3.12]{SISC}, 
the system matrices become densely pouplated and the corresponding eigenvalue problem becomes non-linear. However, through the recent introduction of a new family of eigenvalue solvers known as \emph{contour integral methods} \cite{Beyn_2012aa,Asakura_2009aa}, this problem was mitigated.

A first approach to a boundary element eigenvalue problem via the contour integral method 
was investigated {in \cite{WienersXin:2013}, however, neither with a higher-order approach, a discussion of the related convergence theory, or within the isogeometric setting. Moreover, a comparison of volume-based and boundary element methods has not been discussed within the literature.}

This article is build on top of the recent mathematical results of \cite{SISC,Beyn_2012aa,NuMa,UngerPreprint} and aims at advancing the solution of eigenvalue problems via boundary elements by discussing the convergence analysis of the isogeometric discretisation of the eigenvalue problem. In this, we prove {that a convergence order  ${\mathcal O}(h^{2p+1})$ for the eigenvalue aproximation can be achieved for discretisations with mesh-size $h$ and ansatz functions of order  $p$. For a corresponding volume-based IGA  only a  convergence order  ${\mathcal O}(h^{2p})$ can be expected~\cite{BuffaRivasSangalliVazquez:2011}.}  

The organisation of this document is straight forward. Section \ref{sec::evp} will introduce the cavity problem based on Maxwell's equations. Afterwards, Section \ref{sec::efie} will show how this problem can be rephrased as a boundary integral equation, the well-known \emph{electric field integral equation}, and we will discuss the equivalence of both formulations.
We will then discuss our discretisation scheme, where we will review our isogeometric discretisation of the boundary integral equation in Section \ref{sec::IGA}, making theoretical predictions regarding the convergence behaviour. Our approach to the arising non-linear eigenvalue problem will then be discussed in Section \ref{sec::cim}.
In Section \ref{sec::num} a selection of numerical examples will be presented, and finally we will briefly conclude our findings in Section \ref{sec::con}.\\


\section{The Eigenvalue Problem}\label{sec::evp}
For a compact and simply connected Lipschitz domain $\Omega \in \R^3$ the electromagnetic fields in the source-free case are governed by the equations
\begin{equation}\label{eq:Maxwell-time-harmonic}
\begin{aligned}
\curl{(\bb{E})}       &= -i\omega\mu_0\bb{H} && \text{in }\Omega\\
\curl{(\bb{H})}       &= i\omega\varepsilon_0\bb{E} && \text{in }\Omega\\
\div{(\varepsilon_0\bb{E})} &= 0                     && \text{in }\Omega\\
\div{(\mu_0\bb{H})}  &= 0                     && \text{in }\Omega,
\end{aligned}
\end{equation}
assuming the time-harmonic case.
As usual, $\bb E$ and $\bb H$ denote the electric and magnetic field \cite{Jackson_1998aa}, respectively.
In the case of the cavity problem, the electric permittivity $\varepsilon_0$
and the magnetic permeability $\mu_0$ are those of vacuum.
Moreover, since superconducting alloys can be modelled as perfect electric conducting, we assume the boundary conditions on $\Gamma\coloneqq \partial \Omega$ to be given by
\begin{equation}\label{eq:pecBC}
\begin{aligned}  
\bb{E}\times\bb{n} &= 0 && \text{on }\partial\Omega\\
\bb{H}\cdot\bb{n}  &= 0 && \text{on }\partial\Omega.
\end{aligned}
\end{equation}
By eliminating $\bb{H}$ from~\eqref{eq:Maxwell-time-harmonic} one can then derive the classical cavity problem \cite{Monk_2003aa}.
\begin{problem}[Cavity problem]
Find the wave number $k:=\omega \sqrt{\mu_0\epsilon_0} \in \R$ and $\bb{E} \neq 0$ such that
\begin{equation}\label{eq:Maxwell-eig-cont}
\begin{aligned}
\curl{\bigl(\curl{(\bb{E})}\bigr)} &= k^2 \bb{E} && \text{in }\Omega\\
\div{(\bb{E})} &= 0 && \text{in }\Omega\\
\bb{E}\times\bb{n} &= 0 && \text{on }\Gamma.
\end{aligned}
\end{equation}
\end{problem}

\section{The Electric Field Integral Equation}\label{sec::efie}

Before we introduce the variational formulation of Maxwell's eigenvalue problem, we need to introduce some notation.
\subsection{Function Spaces Related to Maxwell's Equations and the EFIE}
Let $\Omega$ be a Lipschitz domain.
By $H^0(\Omega)$ we denote the usual square integral functions $L^2(\Omega)$, and we write $H^s(\Omega)$ for the usual Sobolev spaces of higher regularity $s>0$, cf.~\cite{McLean_2000aa}. Their vector-valued counterparts $\pmb H^s (\Omega) = \big(H^s(\Omega)\big){}^3$ are denoted by bold letters. {We now define the spaces
\begin{align*}
	\pmb H^s(\curl,\Omega){}&{}=\lbrace \pmb f \in \pmb H^s\colon \curl\pmb f \in \pmb H^s(\Omega)\rbrace, \\
	\pmb H^s(\curl^2,\Omega){}&{}=\lbrace \pmb f \in \pmb H^s(\curl,\Omega)\colon \curl\,\curl\pmb f \in \pmb H^s(\Omega)\rbrace,\text{ and}\\
	\pmb H^s(\div,\Omega){}&{}=\lbrace \pmb f \in \pmb H^s\colon \div\pmb f \in H^s(\Omega)\rbrace,
\end{align*}
dropping the index $s$ in the case of $s=0$ for convenience.}
For $\bb{n}_{\pmb x_0}$ denoting the outwards directed unit normal at $\pmb x_0\in \Gamma$, the \emph{rotated tangential trace}
$$\pmb\gamma_{\mathrm t} (\pmb f) \coloneqq \lim_{\pmb x\to \pmb x_0} \pmb f(\pmb x)\times \bb{n}_{\pmb x_0}$$
is well defined for smooth $\pmb f.$ Note that $\pmb n_{\pmb x_0}$ is well-defined for almost all $\pmb x_0,$ cf.~\cite[p.~96]{McLean_2000aa}. We will equip the rotated tangential trace with the superscript~${}^\mathrm{int}$ if the limit is taken from within $\Omega$, and with the superscipt ${}^\mathrm{ext}$ if taken from $\mathbb R^3\setminus\overline{\Omega},$ omitting the notation if the mapping properties are clearly stated.
The operator $\pmb \gamma_{\mathrm t}$ is extended to a weak setting through density arguments.
We can now define
$\pmb H_\times^{-1/2}(\div_\Gamma,\Gamma) = \pmb \gamma_{\mathrm t}\big(\pmb H(\curl,\Omega)\big).$
By definition this renders the rotated tangential trace $ \pmb \gamma_{\mathrm t}\colon \pmb H(\curl,\Omega)\to \pmb H_\times^{-1/2}(\div_\Gamma,\Gamma)$ surjective, and one can prove that it is continuous, compare \cite{Buffa_2003aa}. 

We can now reformulate Maxwell's eigenvalue problem as a variational problem exclusively on the boundary $\Gamma$ by using the anti-symmetric pairing

\begin{align}
\langle \pmb \nu,\pmb \mu\rangle_\times \coloneqq\int_\Gamma (\pmb \nu \times \bb{n})\pmb \mu\mathrm{d} \Gamma,\label{found::eq::dualitypairingX}
\end{align}
cf.~\cite[Def.~1]{Buffa_2003aa}.

\subsection{Recasting the Eigenvalue Problem}
Any solution of the electric wave equation can be derived via a boundary integral formulation, which we will review within this section. 
The version reviewed here resembles the one presented in \cite[Thm.~6]{Buffa_2003aa}.

We define the \emph{Maxwell single layer potential} $$\MSLP\colon \bb H^{-1/2}_\times(\div_\Gamma,\Gamma)\to \bb H(\bb\curl^2,\Omega)$$
via the Helmholtz fundamental solution 
\begin{align}
u^*_\kappa(\pmb x,\pmb y)=\frac{e^{-i\kappa\Vert\pmb x-\pmb y\Vert}}{4\pi\Vert \pmb x-\pmb y\Vert} \label{eq::fundamental}
\end{align}
as
\begin{align*}
\MSLP (\pmb \mu)(\pmb x)= \int_\Gamma u^*_\kappa(\pmb x,\pmb y)\pmb \mu (\pmb y)\mathrm{d} \Gamma_{\pmb y}  
 + \frac 1{\kappa^2} \pmb \grad_{\pmb x}\int_\Gamma u^*_\kappa(\pmb x,\pmb y)(\div_\Gamma\circ \pmb u)(\pmb y)\mathrm{d} \Gamma_{\pmb y}.
\end{align*}
The mapping properties are known, see \cite[Thm.~5]{Buffa_2003aa}, where it is also shown that the image of $\MSLP$ is divergence free.
With the help of this operator, one can show the following.

\begin{lemma}[Single Layer Representation, {\cite[Thm.~6]{Buffa_2003aa}}]\label{found::thm::indirectrepresentation}
If $\bb E\in \bb H(\bb\curl^2,\Omega)$ is solution to the electric wave equation \eqref{eq:Maxwell-eig-cont} on $\Omega$, then it can be represented via
\begin{align*}
 \bb E = \MSLP\big({\bb\gamma}_{\bb t}^\mathrm{int}\bb (\curl\;\bb E)\big).
\end{align*}
\end{lemma}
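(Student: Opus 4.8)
The plan is to establish the single-layer representation via a Stratton--Chu–type argument, using the divergence-free eigenfield $\bb E$ together with the properties of the Maxwell single layer potential $\MSLP$ recalled from \cite{Buffa_2003aa}. Since $\bb E$ solves \eqref{eq:Maxwell-eig-cont}, it satisfies $\curl\,\curl\,\bb E = k^2 \bb E$ and $\div\bb E = 0$ in $\Omega$, so $\bb E$ is a divergence-free solution of the vector Helmholtz equation; in particular $\bb E\in\bb H(\bb\curl^2,\Omega)$ as required, and the relevant traces are well defined through the weak extension of $\pmb\gamma_{\mathrm t}$.

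\emph{First} I would write down the interior representation formula that expresses $\bb E$ at an arbitrary point $\pmb x\in\Omega$ in terms of its Cauchy data $\pmb\gamma_{\mathrm t}^{\mathrm{int}}\bb E$ and $\pmb\gamma_{\mathrm t}^{\mathrm{int}}(\curl\bb E)$ on $\Gamma$. This is the electromagnetic analogue of Green's third identity: one convolves the governing equation against the fundamental solution $u^*_\kappa$ of \eqref{eq::fundamental} (with $\kappa=k$), integrates by parts twice over $\Omega$, and collects the boundary terms. The outcome is a decomposition of $\bb E$ into a single-layer contribution carrying the datum $\pmb\gamma_{\mathrm t}^{\mathrm{int}}(\curl\bb E)$ and a double-layer contribution carrying $\pmb\gamma_{\mathrm t}^{\mathrm{int}}\bb E$. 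Here the structure of $\MSLP$, with its scalar single-layer term plus the $\kappa^{-2}\,\pmb\grad\,\div_\Gamma$ correction, is exactly what reproduces the single-layer part of this formula; the identity $\div\bb E=0$ is what lets the surface-divergence term line up with $\div_\Gamma\pmb\gamma_{\mathrm t}^{\mathrm{int}}(\curl\bb E)$.

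\emph{Second}, and this is the crux, I would argue that the double-layer contribution vanishes. The perfect-electric-conductor boundary condition $\bb E\times\bb n = 0$ on $\Gamma$ forces $\pmb\gamma_{\mathrm t}^{\mathrm{int}}\bb E = 0$, so the double-layer potential is applied to the zero density and drops out, leaving precisely $\bb E = \MSLP\big(\pmb\gamma_{\mathrm t}^{\mathrm{int}}(\curl\bb E)\big)$. The main obstacle is the justification of the two integrations by parts at the level of weak traces rather than for smooth fields: the jump relations and the duality pairing $\langle\cdot,\cdot\rangle_\times$ of \eqref{found::eq::dualitypairingX} must be invoked with the correct continuity and surjectivity of $\pmb\gamma_{\mathrm t}$ on $\pmb H^{-1/2}_\times(\div_\Gamma,\Gamma)$, and one must verify that all boundary integrals converge in the appropriate trace norms. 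Since this representation is exactly \cite[Thm.~6]{Buffa_2003aa}, I would carry out the smooth case by hand and then pass to the weak setting by the density arguments already established there, which settles the remaining technical points without further ado.
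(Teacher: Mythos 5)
Your proposal is correct and follows exactly the argument behind the cited result: the paper gives no proof of this lemma but refers to the Stratton--Chu representation formula of \cite[Thm.~6]{Buffa_2003aa}, which writes $\bb E$ as a single-layer potential of $\pmb\gamma_{\mathrm t}^{\mathrm{int}}(\curl\bb E)$ plus a double-layer potential of $\pmb\gamma_{\mathrm t}^{\mathrm{int}}\bb E$, after which the PEC condition $\bb E\times\bb n=0$ annihilates the double-layer term. This is precisely your two-step plan, so there is nothing to add.
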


We now define the 
\emph{Maxwell single layer operator} $\MSL = \pmb\gamma_{\mathrm t}^\mathrm{int} \circ \MSLP$. 
Applying the rotated tangential trace to the identity in Lemma \ref{found::thm::indirectrepresentation} allows us to recast the eigenvalue problem \eqref{eq:Maxwell-eig-cont} as a variational problem w.r.t.~the duality pairing \eqref{found::eq::dualitypairingX}. The underlying boundary integral equation 
reads as follows:
 \begin{problem}[Variational Eigenvalue Problem]\label{problem::variational}
     Find a non-zero $\pmb j\in \pmb H_\times^{{-1/2}}(\div_\Gamma,\Gamma)$ and $\kappa>0$ such that
    \begin{align}\label{Eq:EigVarFormulation}       
        \langle \MSL\pmb j,\pmb \mu\rangle_\times = 0 
    \end{align}
    holds for all $\pmb\mu\in \pmb H_\times^{{-1/2}}(\div_\Gamma,\Gamma)$.
\end{problem}
The problem in~\eqref{Eq:EigVarFormulation} is a nonlinear eigenvalue problem with respect to the eigenvalue parameter $\kappa$, since $\kappa$ occurs nonlinearly in the fundamental solution \eqref{eq::fundamental} which builds the kernel of the single layer boundary integral operator $\MSL$. 
The eigenvalue problem formulations~\eqref{eq:Maxwell-eig-cont} and~\eqref{Eq:EigVarFormulation} are equivalent in the following sense:

\begin{lemma}[Equivalence of eigenvalue problems] Let $\kappa\in\R$ and $\kappa>0$. 
 \begin{enumerate}[a)]
  \item If $(\kappa,\bb E)$ is an eigenpair of~\eqref{eq:Maxwell-eig-cont}, then $(\kappa,({\bb\gamma}_{\bb t}^\mathrm{int}\bb (\curl\;\bb E))$ is an eigenpair of~\eqref{Eq:EigVarFormulation}.
  \item If $(\kappa, \pmb j)$ is an eigenpair of~\eqref{Eq:EigVarFormulation}, then $(\kappa, \MSLP\pmb j)$ is an eigenpair of~\eqref{eq:Maxwell-eig-cont}.
 \end{enumerate}
\end{lemma}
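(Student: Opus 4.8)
The plan is to treat the two implications separately, using Lemma~\ref{found::thm::indirectrepresentation} and the definition $\MSL = \pmb\gamma_{\mathrm t}^{\mathrm{int}}\circ\MSLP$ as the bridge between the volume formulation \eqref{eq:Maxwell-eig-cont} and the boundary formulation \eqref{Eq:EigVarFormulation}. The forward direction~(a) is essentially a restatement of the representation lemma together with the boundary condition, whereas the backward direction~(b) requires that the potential $\MSLP\pmb j$ genuinely reproduces an interior eigenfield, and it is the non-triviality $\MSLP\pmb j\neq 0$ that I expect to be the delicate point.

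For part~(a) I would set $\pmb j := \pmb\gamma_{\mathrm t}^{\mathrm{int}}(\curl\bb E)$. Lemma~\ref{found::thm::indirectrepresentation} then yields the identity $\bb E = \MSLP\pmb j$. Applying the interior rotated tangential trace and using the boundary condition $\bb E\times\bb n = 0$ of \eqref{eq:Maxwell-eig-cont}, i.e.\ $\pmb\gamma_{\mathrm t}^{\mathrm{int}}\bb E = 0$, gives $\MSL\pmb j = \pmb\gamma_{\mathrm t}^{\mathrm{int}}\MSLP\pmb j = 0$, so that $\langle\MSL\pmb j,\pmb\mu\rangle_\times = 0$ for every $\pmb\mu$, which is exactly \eqref{Eq:EigVarFormulation}. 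Non-triviality is immediate here: were $\pmb j$ to vanish, the representation would force $\bb E = \MSLP\pmb j = 0$, contradicting $\bb E\neq 0$.

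For part~(b) I would set $\bb E := \MSLP\pmb j$. The mapping properties of the single layer potential (cf.\ \cite[Thm.~5]{Buffa_2003aa}) guarantee $\bb E\in\bb H(\bb\curl^2,\Omega)$, that $\bb E$ solves the homogeneous equation $\curl\curl\bb E = \kappa^2\bb E$ in $\Omega$, and that $\div\bb E = 0$; this covers the first two lines of \eqref{eq:Maxwell-eig-cont}. For the boundary condition I would use $\pmb\gamma_{\mathrm t}^{\mathrm{int}}\bb E = \MSL\pmb j$: the variational identity states $\langle\MSL\pmb j,\pmb\mu\rangle_\times = 0$ for all $\pmb\mu\in\pmb H^{-1/2}_\times(\div_\Gamma,\Gamma)$, and since $\MSL\pmb j$ itself lies in this space, the non-degeneracy of the pairing $\langle\cdot,\cdot\rangle_\times$ (cf.\ \cite[Def.~1]{Buffa_2003aa}) forces $\MSL\pmb j = 0$, i.e.\ $\bb E\times\bb n = 0$ on $\Gamma$.

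The main obstacle is to exclude the degenerate case $\bb E = \MSLP\pmb j = 0$ in part~(b), as an eigenfunction must be non-zero; here the clean argument of~(a) is unavailable, since $\pmb j\neq 0$ does not by itself prevent the \emph{interior} potential from vanishing. The plan is to invoke the jump relations of the Maxwell single layer potential: the rotated tangential trace is continuous across $\Gamma$, so $\pmb\gamma_{\mathrm t}^{\mathrm{ext}}\bb E = \pmb\gamma_{\mathrm t}^{\mathrm{int}}\bb E = 0$, while the associated magnetic (Neumann) trace jumps by $\pmb j$. Assuming $\bb E\equiv 0$ in $\Omega$, the exterior field $\MSLP\pmb j$ on $\R^3\setminus\overline{\Omega}$ solves the exterior Maxwell problem with vanishing tangential trace and satisfies the Silver--Müller radiation condition, so exterior uniqueness yields $\MSLP\pmb j\equiv 0$ there as well; the Neumann jump then vanishes, giving $\pmb j = 0$, a contradiction. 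I expect the careful bookkeeping of these trace and jump identities, rather than any computation, to be the crux.
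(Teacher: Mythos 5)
Your proposal is correct and takes essentially the same route as the paper: part~(a) via the representation of Lemma~\ref{found::thm::indirectrepresentation} and the boundary condition, and part~(b) via the mapping properties of $\MSLP$, the non-degeneracy of $\langle\cdot,\cdot\rangle_\times$ to pass from the variational identity to $\MSL\pmb j=0$, and exterior uniqueness combined with the jump relations to rule out $\MSLP\pmb j\equiv 0$ (the paper delegates this last step to \cite[Prop.~2.1(ii)]{UngerPreprint}). Your write-up merely fills in details the paper leaves implicit.
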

\begin{proof}
Assertion a) has been already shown by the derivation of the boundary integral formulation~\eqref{Eq:EigVarFormulation}. For assertion b) note that $\MSLP \pmb j$ is a solution of Maxwell's equation in $\Omega$~\cite[Sect.~4]{Buffa_2003aa} and that $0=\MSL\pmb j=\pmb\gamma_{\mathrm t}^\mathrm{int}\MSLP \pmb j$. It remains to show that $\MSLP \pmb j\neq 0$ in $\Omega$ which follows form the unique solvability of the exterior problem and the jump relations of the single layer potential on the boundary  $\Gamma$, see~\cite[Prop.~2.1(ii)]{UngerPreprint}.
\end{proof}

We want to mention that  the  eigenvalue problem~\eqref{Eq:EigVarFormulation} has in addition to the real eigenvalues also non-real eigenvalues which correspond to the eigenvalues of the exterior eigenvalue problem. We refer to~\cite{UngerPreprint} for an analysis of this kind of eigenvalues.

\section{A Brief Review of Isogeometric Analysis}\label{sec::IGA}

    This section is devoted to a brief  review of isogeometric analysis as required for its utilisation in the context of boundary element methods for electromagnetism. 
	We follow the lines of \cite{Doelz_2017aa,SISC}, which in their turn are based on the works of Buffa et al.~\cite{Buffa_2010aa} aimed at finite element discretisations.

	Let $p\geq 0$ and $m>0$ be integers. While all of the results reviewed in this article are applicable to so-called \emph{$p$-open locally quasi-uniform knot vectors}, cf.~\cite[Assum.~2.1]{Beirao-da-Veiga_2014aa}, we assume all \emph{knot vectors} to be of the form
	$$\Xi^p_m = \big[\underbrace{0,\cdots,0}_{p+1\text{ times}},1/2^m,\cdots ,(2^m-1)/2^m ,\underbrace{1,\cdots,1}_{p+1\text{ times}}\big].$$
	This is introduced only for notational convenience and in agreement with our numerical examples. 
	We will then refer to $m$ as the \emph{refinement level} and define the \emph{mesh size} $h\coloneqq 2^{-m}.$
	We proceed to define B-spline bases via the well-known recursion formula 
	\begin{align*}
		b_i^p(x) & = \frac{x-\xi_i}{\xi_{i+p}-\xi_i}b_i^{p-1}(x) +\frac{\xi_{i+p+1}-x}{\xi_{i+p+1}-\xi_{i+1}}b_{i+1}^{p-1}(x),
	\end{align*}
	anchored for $p=0$ by the locally constant functions $b_i^0(x) = \chi_{[\xi_i,\xi_i+1)},$ cf.~\cite[Sec.~2.2]{Piegl_1997aa}. We define the spline space $S^p_m$ as the span of the B-spline basis defined on $\Xi^p_m.$

	In reference domain, i.e., on $\square\coloneqq(0,1)^2$, we can now define the divergence conforming spline space $\pmb{\mathbb S}^p_m(\square)$ as done in \cite[Sec.~5.5]{Beirao-da-Veiga_2014aa} via
	$$\pmb{\mathbb S}^p_m(\square)\coloneqq (S^p_m\otimes S^{p-1}_m) \times (S^{p-1}_m\otimes S^{p}_m),$$
	for $\otimes$ denoting the tensor product and $\times$ denoting the Cartesian product.

	\subsection{Geometry and Discretisation in the Physical Domain}

    Let a \emph{patch} $\Gamma$ be given by the image of $\square$ under an invertible diffeomorphism $\pmb F\colon \square\to\Gamma\subseteq\mathbb R^3$.
    For $\Omega$ being a Lipschitz domain, define a \emph{multipatch geometry} to be a compact, orientable two-dimensional manifold $\Gamma=\partial \Omega$ invoked via  $\bigcup_{0\leq j <N} \Gamma_j$ by a family of patches $\lbrace \Gamma_j\rbrace_{0\leq j<N},$ $N\in \mathbb N$. 
    The family of diffeomorphisms $\lbrace \pmb F_j \colon \square\hookrightarrow \Gamma_j\rbrace_{0\leq j<N}$ will be called \emph{parametrisation}.

    We assume the $\Gamma_j$ to be disjoint and that for any \emph{patch interface} $D$ of the form $D=\partial\Gamma_{j_0}\cap \partial\Gamma_{j_1}\neq \emptyset$ we require the continuous extensions of the parametrisations $\pmb F_{j_0}$ and $\pmb F_{j_1}$ onto $\overline{\square}=[0,1]^2$ to coincide.

	Within the framework of isogeometric analysis parametrisations will usually be given by tensor products of non-uniform rational B-splines (NURBS), i.e., functions of the form 
	\begin{align*}
        \pmb F_j(x,y)\coloneqq \sum_{0\leq j_1<k_1}\sum_{0\leq j_2<k_2}\frac{\pmb c_{j_1,j_2} b_{j_1}^{p_1}(x) b_{j_2}^{p_2}(y) w_{j_1,j_2}}{ \sum_{i_1=0}^{k_1-1}\sum_{i_2=0}^{k_2-1} b_{i_1}^{p_1}(x) b_{i_2}^{p_2}(y) w_{i_1,i_2}}.
    \end{align*}

    To extenf the definition of $\pmb{\mathbb S}^p_m$ to the physical domain, we resort to an application of the so-called \emph{Piola transformation} \cite[Chap.~6]{Peterson_2006aa}.

    For the case of geometry mappings $\pmb F_j\colon \square\to\Gamma_j$ and $\pmb f\in \pmb H^{-1/2}_\times(\div_\Gamma,\Gamma)$ that allow for point evaluations, its explicit form is given by 
    \begin{align*}
        \iota_{\pmb F_j}(\pmb f)(\pmb x)&\coloneqq \eta(\pmb x)(\pmb {\mathrm d}\pmb F_j)^{-1} (\pmb f\circ \pmb F_j)\big)(\pmb x),&&\pmb x\in \square.
    \end{align*}
    Therein, $\eta$ denotes the \emph{surface measure}
    $$ \eta(\pmb x) \coloneqq  \Vert{\partial_{x_1}\pmb F_j( {\pmb x})\times \partial_{x_2}\pmb F_j( {\pmb x})}\Vert ,$$ 
    and $\pmb {\mathrm d}\pmb F_j$ denotes the Jacobian of $\pmb F_j.$ Although it is not readily invertible, an inverse exists in the sense of a mapping from the two-dimensional tangent space of $\Gamma_j$ to vector fields on $\square,$ as has already been discussed in \cite[Sec.~3.2]{SISC}.
    We remark that it needs not be computed, since all computations can be reduced to computations in the reference domain. Explicit formulae are easily derived, see e.g.~\cite[Sec.~6.3]{Peterson_2006aa}.

    We can now introduce the globally divergence conforming space
    $$ \pmb {\mathbb S}^p_m(\Gamma)\coloneqq  \left\lbrace \pmb f\in \pmb H_\times^{{-1/2}}(\div_\Gamma,\Gamma)\colon  \iota_{\pmb F_j}(\pmb f|_{\Gamma_j}) \in \pmb {\mathbb S}^p_m(\square)\text{ for all }0\leq j< N\right\rbrace.$$
    It has been analysed in \cite[Def.~10]{NuMa}, where it has been shown that it enjoys quasi-optimal approximation properties w.r.t.~energy norm of the EFIE. Specifically, the spline space satisfies estimates of the kind
    \begin{align}
    	\min_{\pmb f_h\in \pmb{\mathbb S}^p_m(\Gamma)}\Vert{\pmb f-\pmb f_h}\Vert_{\pmb H_\times^{{-1/2}}(\div_\Gamma,\Gamma)}\leq Ch^{s+1/2}\Vert \pmb f\Vert_{\pmb H^s_{\mathrm{pw}}(\div_\Gamma,\Gamma)},\qquad 0\leq s\leq p,\label{eq::iga::approximation}
    \end{align}
    for all $\pmb f\in \pmb H^s_{\mathrm{pw}}(\div_\Gamma,\Gamma)$, cf.~\cite[Thm.~3]{NuMa}, where the space ${\pmb H^s_{\mathrm{pw}}(\div_\Gamma,\Gamma)}$
    is equipped with the norm
    $$\Vert \pmb f \Vert_{\pmb H^s_{\mathrm{pw}}(\div_\Gamma,\Gamma)}\coloneqq \sum_{0\leq j<N}\big\Vert{{\iota_{\pmb F_j}(\pmb f|_{\Gamma_j})\big\Vert_{\pmb H^s(\div,\square)}}}.$$
    We remind of the relation $h=2^{-m}$ due to the assumption of uniform refinement on the knot vectors.
    These discrete spaces give rise to the discrete analogue of Problem \ref{problem::variational}. 
    \begin{problem}[Discrete Eigenvalue Problem]\label{problem::variational::disc}
     Find a non-zero $\pmb j_h\in \pmb{\mathbb S}^p_m(\Gamma)$ and $\kappa_h$ such that
    \begin{align}\label{GalerkinEVP}       
        \langle \MSL\pmb j_h,\pmb \mu\rangle_\times = 0 
    \end{align}
    holds for all $\pmb\mu\in \pmb{\mathbb S}^p_m(\Gamma)$.
    \end{problem}
    
    In its discrete form, this problem induces a linear system
    \begin{align}
    \MSLh \bb j_h = 0,\label{eq::linear_system}
    \end{align}
    which we choose to assemble as in \cite{SISC,IEEE}.
    Therein, $\MSLh$ can be interpreted as a matrix valued function dependent on $\kappa.$

    The well-posedness of this discrete problem is closely related to the following criteria, cf.~\cite[Sec.~3]{Bespalov_2010aa}.

    \begin{itemize}
        \item[(C1)] There exists a continuous splitting $\pmb H^{-1/2}_\times(\div_\Gamma,\Gamma) = \pmb W\oplus \pmb V$ such that the bilinear form $\langle \MSL \cdot,\cdot\rangle_\times$ is stable and coercive on $\pmb V\times \pmb V$ and $\pmb W\times \pmb W$, and compact on $\pmb V\times \pmb W$ and $\pmb W\times \pmb V$.
        \item[(C2)] $\pmb {\mathbb S}^p_m(\Gamma)$ can be decomposed into a sum $\pmb {\mathbb S}^p_{\pmb m}(\Gamma) = \pmb W_h \oplus \pmb V_h$ of closed subspaces of $\pmb H^{-1/2}_\times(\div_\Gamma,\Gamma)$.
        \item[(C3)] $\pmb W_h$ and $\pmb V_h$ are stable under complex conjugation.
        \item[(C4)] Both $\pmb W_h\subseteq \pmb W$, as well as the so-called \emph{gap-property}
            \begin{align}\label{eq::gap-property}
            \sup_{\pmb v_h\in \pmb V_{h}}\inf_{\pmb v\in \pmb V}\frac{\Vert{\pmb v-\pmb v_h}\Vert_{\pmb H^{-1/2}_\times(\div_\Gamma,\Gamma)}}{\Vert{\bb v_h}\Vert_{\pmb H^{-1/2}_\times(\div_\Gamma,\Gamma)}}\stackrel{h\rightarrow 0}{\longrightarrow} 0
            \end{align}
        hold.
    \end{itemize}

    These properties have been proven for the isogeometric discretisation \cite[Thm.~3.9]{SISC}.

    \subsection{Numerical Analysis}

The convergence of the eigenvalues and eigenfunctions of the Galerkin eigenvalue problem~\eqref{GalerkinEVP} can be shown using abstract results of~\cite{Halla:2016,UngerPreprint} and \cite{Karma1, Karma2}. Crucial for the convergence analysis are the above listed criteria (C1)-(C4), that $\MSL$ satisfies a $T$-G\r{a}rding's inequality with respect to the splitting in (C1)~\cite{SISC}, and that the mapping  $\C\setminus\{0\}\ni\kappa\mapsto\MSL$ is holomorphic~\cite[Lem.~2]{UngerPreprint}.  In~\cite{Halla:2016,UngerPreprint}  sufficient conditions for the convergence of conforming Galerkin methods are specified  for eigenvalue problems for holomorphic $T$-G\r{a}rding  operator-valued functions.  According to~\cite[Lem.~5]{UngerPreprint} and \cite[Lem.~2.7]{Halla:2016} the  criteria (C1)-(C4) allow  the application of the  convergence theory established in~\cite{Karma1, Karma2} to the Galerkin eigenvalue problem~\eqref{GalerkinEVP}. From the comprehensive convergence results presented in~\cite{Karma1, Karma2,Halla:2016,UngerPreprint}  we only want to state the error estimate for semi-simple eigenvalues $\kappa$:
\begin{equation}\label{ErrorEstimate1}
 \vert \kappa-\kappa_h\vert =\mathcal{O}(\delta_{m,p}^2),
\end{equation}
where
\begin{equation}\label{ErrorEstimate2}
 \delta_{m,p}
:=\sup_{\substack{\pmb j\in \mathrm{ker}\MSL\\
\|\pmb j\|_{\pmb H^{-1/2}_\times(\div_\Gamma,\Gamma) }= 1}} \inf_{\pmb j_h\in \pmb{\mathbb S}^p_m(\Gamma)}\|\pmb j-\pmb j_h\|_{\pmb H^{-1/2}_\times(\div_\Gamma,\Gamma) }.
\end{equation}
The quadratic convergence order with respect to $\delta_{h,p}$ follows from the fact that for any eigenfunction $\pmb  j_\text{adj}$  of  the adjoint eigenvalue problem there exists an eigenfunction $\pmb j$ of the eigenvalue problem  $\MSL\pmb j=0$ such that  $\overline{\pmb j}=\pmb j_\text{adj}$, see~\cite[Lem.~3]{UngerPreprint}.
The estimate \eqref{eq::iga::approximation} together with \eqref{ErrorEstimate1} and \eqref{ErrorEstimate2} yields the final estimate
\begin{align}
 \vert \kappa-\kappa_h\vert = \mathcal{O}(h^{2(p+1/2)}),\label{ErrorEstimate3}
\end{align}
for sufficiently smooth surface current densities $\pmb j.$

\begin{remark}[Convergence on non-smooth geometries]\label{rem::nonsmoothGeom}
    In this sense sufficiently smooth needs to be understood in terms of patchwise regularity as explained in \cite{NuMa}. In general, for surface current densities on non-smooth geometries this smoothness assumption will not be fulfilled, since $\pmb j$ may admit singularities at corners and edges.

    However, for the specific densities of resonant modes within interior cavities this smoothness assumption will often be fulfilled.  For the cube an analytical representation of $\pmb j$ is known to be smooth~\cite{WienersXin:2013}, and even for other, non-trivial geometries this can be argued.
\end{remark}
\section{The Contour Integral Method}\label{sec::cim}

This chapter will give a short summary of the contour integral method as introduced by Beyn \cite{Beyn_2012aa}, without any non-trivial modifications. Note that there are alternative approaches by other authors, where the first publication \cite{Asakura_2009aa} seems to go back to 2009, compare also \cite{Imakura_2016aa}. After this short review, we will discuss three numerical examples that utilise this method to solve Problem \ref{GalerkinEVP}.

	Let $\bb T\colon M\to \mathbb C^{m\times m}$ be holomorphic on some domain $M\subset \mathbb C$. Let $\bb T^h$ denote the Hermitian transpose of $\bb T$ and $\bb T'$ denote its usual transpose. 
	We want to solve the nonlinear eigenvalue problem
	$$ \bb T(z)\bb v  = 0,\quad \bb v\in\mathbb C^m,\quad \bb v\neq \bb 0,\quad z\in M.$$ 
	If $z$ is no eigenvalue of $\bb T$, we find by the condition above that the kernel of $\bb T(z)$ is trivial, thus $\bb T(z)$ has full rank and is invertible.

	Fix some domain $D\subseteq  M$ with boundary $\partial D,$ and assume there to be $k$ eigenvalues $(\lambda_j)_{j\leq k}$ in the interior of $D$, such that all eigenvalues are simple. We remark that {the entire approach  was also generalised to that case of non simple eigenvalues \cite{Beyn_2012aa}.} 
	As a consequence of the version of the Keldysh Theorem stated in \cite{Mennicken_2003aa} one can proof the following result, on which this method is based.

	\begin{theorem}[Contour Integral Theorem, {\citep[Thm.~2.9.]{Beyn_2012aa}}]\label{13}\index{contour integral}
		For $\bb T$ as above and holomorphic $f\colon M\to \mathbb C$ it holds that 
		\begin{align*}
			\frac 1{2\pi i}\int_{\partial D} f(z) \bb T(z)^{-1} \mathrm{d} z =\sum^k_{j=1}f(\lambda_j)\bb v_h\bb w_h^h,
		\end{align*}
		where $\bb v_j$ and $\bb w_j$ are left and right eigenvectors corresponding to $\lambda_n$ that are normalized according to 
		$$\bb w_j^h \bb T'(\lambda_j)\bb v_j=1$$
		for all $j\leq k$.
	\end{theorem}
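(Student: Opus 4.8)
The plan is to recognise the left-hand side as a matrix-valued Cauchy integral and to evaluate it by the residue theorem, so that the whole statement reduces to identifying the principal part of the resolvent $z\mapsto\bb T(z)^{-1}$ at each of the simple eigenvalues $\lambda_j$ enclosed by $\partial D$. First I would record that $\bb T(z)^{-1}$ is meromorphic on $M$: by Cramer's rule $\bb T(z)^{-1}=\operatorname{adj}(\bb T(z))/\det\bb T(z)$, where $\operatorname{adj}(\bb T(z))$ is entrywise holomorphic and $\det\bb T(z)$ is a scalar holomorphic function whose zero set is precisely the spectrum of $\bb T$. Consequently the only singularities of $z\mapsto f(z)\bb T(z)^{-1}$ inside $\overline D$ are the $\lambda_j$, and since no eigenvalue lies on $\partial D$ the integrand is holomorphic on the contour and the integral is well defined.

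The crux is the local analysis at a single simple eigenvalue $\lambda_j$, and this is exactly the content supplied by the Keldysh theorem cited from \cite{Mennicken_2003aa}. I would establish that the pole is simple with rank-one residue $\bb v_j\bb w_j^h$. Self-containedly, simplicity of $\lambda_j$ forces $\operatorname{rank}\bb T(\lambda_j)=m-1$, so $\operatorname{adj}(\bb T(\lambda_j))$ is a rank-one matrix whose column space is $\ker\bb T(\lambda_j)=\operatorname{span}\bb v_j$ and whose row space is the left kernel $\operatorname{span}\bb w_j^h$; hence $\operatorname{adj}(\bb T(\lambda_j))=\alpha\,\bb v_j\bb w_j^h$ for some scalar $\alpha$. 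Since $\det\bb T$ has a simple zero, Jacobi's formula $\tfrac{\mathrm d}{\mathrm dz}\det\bb T(z)=\operatorname{tr}\!\bigl(\operatorname{adj}(\bb T(z))\,\bb T'(z)\bigr)$ evaluated at $\lambda_j$ gives $(\det\bb T)'(\lambda_j)=\alpha\,\operatorname{tr}\!\bigl(\bb v_j\bb w_j^h\bb T'(\lambda_j)\bigr)=\alpha\,\bb w_j^h\bb T'(\lambda_j)\bb v_j$. Dividing the numerator by this quantity, the unknown scale $\alpha$ cancels and the residue of $\bb T(z)^{-1}$ at $\lambda_j$ equals $\bb v_j\bb w_j^h/\bigl(\bb w_j^h\bb T'(\lambda_j)\bb v_j\bigr)$, which is precisely $\bb v_j\bb w_j^h$ under the stated normalisation $\bb w_j^h\bb T'(\lambda_j)\bb v_j=1$.

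With the principal parts identified, I would finish by applying the residue theorem componentwise to the meromorphic matrix function $f(z)\bb T(z)^{-1}$. Because $f$ is holomorphic, the residue of $f(z)\bb T(z)^{-1}$ at the simple pole $\lambda_j$ is $f(\lambda_j)$ times the residue of $\bb T(z)^{-1}$, namely $f(\lambda_j)\bb v_j\bb w_j^h$; summing over the $k$ enclosed eigenvalues yields the claimed identity. The main obstacle is the middle step, pinning down the residue as the rank-one matrix with the correct normalisation: the interplay between the adjugate, the simple zero of the determinant, and the normalisation condition is where the actual work lies. It is exactly this that Keldysh's theorem packages, so the cleanest alternative route is simply to quote the Keldysh expansion $\bb T(z)^{-1}=\sum_{j=1}^k (z-\lambda_j)^{-1}\bb v_j\bb w_j^h + R(z)$ with $R$ holomorphic on $D$, integrate term by term, and note that $\int_{\partial D} f(z)R(z)\,\mathrm dz=0$ by Cauchy's theorem.
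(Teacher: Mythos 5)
Your argument is correct, and it is essentially the route the paper takes: the paper does not prove Theorem~\ref{13} itself but quotes it from Beyn as ``a consequence of the version of the Keldysh Theorem stated in \cite{Mennicken_2003aa},'' i.e.\ precisely the ``quote the Keldysh expansion $\bb T(z)^{-1}=\sum_j(z-\lambda_j)^{-1}\bb v_j\bb w_j^h+R(z)$ and integrate'' shortcut you offer in your last paragraph. What you add is a self-contained elementary substitute for Keldysh in the simple-eigenvalue case: Cramer's rule to see that $\bb T(z)^{-1}$ is meromorphic with poles only at the spectrum, the rank-one structure of $\operatorname{adj}(\bb T(\lambda_j))$ with column and row spaces equal to the right and left kernels, and Jacobi's formula to show the scalar $\alpha$ cancels against $(\det\bb T)'(\lambda_j)=\alpha\,\bb w_j^h\bb T'(\lambda_j)\bb v_j$, yielding the residue $\bb v_j\bb w_j^h/\bigl(\bb w_j^h\bb T'(\lambda_j)\bb v_j\bigr)$. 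This buys transparency and avoids importing the full Keldysh machinery (which is really needed for the non-semisimple case that Beyn treats separately), at the cost of being tied to simple eigenvalues. The only point to make explicit is the definition of ``simple'': you need $\lambda_j$ to be a simple zero of $\det\bb T$, from which the local Smith form gives $\dim\ker\bb T(\lambda_j)=1$ and, via your own Jacobi computation, the nonvanishing of $\bb w_j^h\bb T'(\lambda_j)\bb v_j$ that makes the normalisation admissible; geometric simplicity of the kernel alone would not suffice. With that reading, the componentwise residue theorem finishes the proof exactly as you describe.
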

	
	For our purposes, where we set $\bb T(z)=\bb V_z$ as a matrix-valued function, cf.~\eqref{eq::linear_system}, the contour integral method can be reduced to  the following theorem.

	\begin{theorem}[Linearisation of Eigenvalue Problems, {\cite[Thm.~3.1]{Beyn_2012aa}}] \index{contour integral!method}\label{cim::thm::cim}	Suppose that $\bb T\colon  M\to\mathbb C^{m\times m}$ is holomorphic and has only simple eigenvalues $(\lambda_j)_{j\leq k}$ in some connected subdomain $D\subset M$.  
	Then there exists a diagonalizable matrix $\bb B$ which can be computed from evaluations of $\bb T$ such that $\bb B$ that has the same eigenvalues as the eigenvalue problem under consideration within $D$.
	\end{theorem}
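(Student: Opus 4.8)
The plan is to follow Beyn's construction and turn the nonlinear problem into a standard matrix eigenvalue problem by probing $\bb T(z)^{-1}$ with contour integrals against two different test functions. First I would fix a probe matrix $\hat{\bb V}\in\mathbb C^{m\times \ell}$ with $\ell\geq k$ and define the two moment matrices
\begin{align*}
\bb A_0 = \frac{1}{2\pi i}\int_{\partial D}\bb T(z)^{-1}\hat{\bb V}\,\mathrm d z,\qquad
\bb A_1 = \frac{1}{2\pi i}\int_{\partial D} z\,\bb T(z)^{-1}\hat{\bb V}\,\mathrm d z.
\end{align*}
Applying Theorem~\ref{13} once with $f\equiv 1$ and once with $f(z)=z$ immediately yields the factorisations $\bb A_0 = \bb V\bb W^h\hat{\bb V}$ and $\bb A_1 = \bb V\Lambda\bb W^h\hat{\bb V}$, where $\bb V=[\bb v_1,\dots,\bb v_k]$ and $\bb W=[\bb w_1,\dots,\bb w_k]$ collect the right and left eigenvectors and $\Lambda=\operatorname{diag}(\lambda_1,\dots,\lambda_k)$. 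Abbreviating $\tilde{\bb W}\coloneqq \bb W^h\hat{\bb V}$, these read $\bb A_0=\bb V\tilde{\bb W}$ and $\bb A_1=\bb V\Lambda\tilde{\bb W}$, so that both moment matrices carry the sought spectral data in a conveniently factored form built only from evaluations of $\bb T$.

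The second step is a rank argument. Since the eigenvalues are simple, the right eigenvectors $\bb v_j$ are linearly independent, so $\bb V$ has full column rank $k$; and for a generic (e.g.\ randomly chosen) probe $\hat{\bb V}$ the matrix $\tilde{\bb W}=\bb W^h\hat{\bb V}$ has full row rank $k$. Hence $\operatorname{rank}\bb A_0=k$, which is precisely what allows the method to detect the number of enclosed eigenvalues from the decay of the singular values. I would then compute the reduced singular value decomposition $\bb A_0=\bb V_0\Sigma_0\bb W_0^h$ with $\bb V_0\in\mathbb C^{m\times k}$, invertible $\Sigma_0\in\mathbb C^{k\times k}$, and $\bb W_0\in\mathbb C^{\ell\times k}$. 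Because $\bb V_0$ and $\bb V$ both span the range of $\bb A_0$, namely the span of the eigenvectors, there exists a nonsingular $\bb S\in\mathbb C^{k\times k}$ with $\bb V=\bb V_0\bb S$; left-multiplying $\bb A_0=\bb V_0\bb S\tilde{\bb W}=\bb V_0\Sigma_0\bb W_0^h$ by $\bb V_0^h$ and using $\bb V_0^h\bb V_0=\bb I$ then gives the key relation $\bb S\tilde{\bb W}=\Sigma_0\bb W_0^h$.

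Finally I would define the candidate matrix
\begin{align*}
\bb B \coloneqq \bb V_0^h\bb A_1\bb W_0\Sigma_0^{-1},
\end{align*}
which depends only on the computable quantities $\bb A_0$, $\bb A_1$ and their SVD. Substituting $\bb A_1=\bb V_0\bb S\Lambda\tilde{\bb W}$, using $\tilde{\bb W}\bb W_0=\bb S^{-1}\Sigma_0$ (obtained from $\bb S\tilde{\bb W}=\Sigma_0\bb W_0^h$ together with $\bb W_0^h\bb W_0=\bb I$), and $\bb V_0^h\bb V_0=\bb I$, the expression collapses to $\bb B=\bb S\Lambda\bb S^{-1}$. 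This exhibits $\bb B$ as similar to the diagonal matrix $\Lambda$, proving that $\bb B$ is diagonalisable and that its eigenvalues are exactly the $\lambda_j$ enclosed by $\partial D$; the eigenvectors of $\bb T$ are moreover recovered as $\bb v_j=\bb V_0\bb s_j$ from the eigenvectors $\bb s_j$ of $\bb B$.

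The step I expect to be the main obstacle is the rank bookkeeping. One must guarantee both that the right eigenvectors are linearly independent, so that $\bb V$ attains rank $k$, and that $\hat{\bb V}$ does not accidentally lie in an unfavourable subspace, so that $\tilde{\bb W}$ attains rank $k$; only under these conditions does the numerical rank of $\bb A_0$ equal $k$ and the reduction to $\bb B$ go through. The genericity of $\hat{\bb V}$ is what makes this hold almost surely, whereas the deficient case $\ell<k$ would force one to augment $\bb A_0,\bb A_1$ with higher-order moments in a block-Hankel construction.
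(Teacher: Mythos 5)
Your construction is exactly Beyn's proof of Theorem~3.1, and it is also precisely what the paper itself offers: the paper does not reprove the statement but remarks that the proof is constructive and records it as Algorithm~\ref{algo}, whose steps (the moment matrices $\bb A_0$ and $\bb A_1$ in lines 5 and 15, the SVD of $\bb A_0$, and $\bb B=\bb V_0^h\bb A_1\bb W_0\bb \Sigma_0^{-1}$) are your proof written as pseudocode. The derivation of $\bb A_0=\bb V\tilde{\bb W}$ and $\bb A_1=\bb V\Lambda\tilde{\bb W}$ from Theorem~\ref{13} with $f\equiv 1$ and $f(z)=z$, the identification $\bb V=\bb V_0\bb S$, and the collapse to $\bb B=\bb S\Lambda\bb S^{-1}$ all match the cited source.

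One caveat on the step you yourself flag as delicate: for a \emph{nonlinear} eigenvalue problem, simplicity of the eigenvalues does \emph{not} imply that the right eigenvectors $\bb v_1,\dots,\bb v_k$ are linearly independent (already for $m=1$ and $\bb T(z)=(z-1)(z-2)$ there are two simple eigenvalues sharing the eigenvector $1$, and $k>m$). Beyn therefore takes $k\le m$ and the full rank of $\bb V$ and of $\bb W^h\hat{\bb V}$ as \emph{hypotheses} of his Theorem~3.1; only the full rank of $\bb W^h\hat{\bb V}$ can be obtained generically from a random probe $\hat{\bb V}$, while $\operatorname{rank}\bb V=k$ is a genuine additional assumption. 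Since the paper states the theorem equally informally, this does not put you at odds with it, but in a self-contained proof you would need to add these rank hypotheses rather than derive them from simplicity.
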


	The proof of this theorem is constructive and corresponds to the following algorithm, on which our implementation is based.

\begin{algo}[Contour Integral Method, {\cite[p.~15]{Beyn_2012aa}}]\label{algo}
Let $\bb T$ be given as above, $\lbrace t_n\rbrace_{n\leq N}$ be a discretisation of some boundary $\partial D$ as above, $\delta,\epsilon>0$ and  $\ell<\text{size}(\bb T(z))$.\\
\begin{algorithm}[H]
 \KwData{$\lbrace t_n\rbrace_{n\leq N},\bb T,\delta,\epsilon,\ell$}
 \KwResult{$(\lambda_j)_{j\leq k}$}
 kfound $\gets$ false\;
 $m\gets$ size($\bb T$)\;
 \While{kfound $==$ false }{
 $\hat {\bb V} \gets \operatorname{RandomFullRank}(m\times \ell)$\;
 $\bb A_0 \gets \frac 1{iN}\sum_{j=1}^{N}\bb (T( t_j))^{-1}\hat {\bb V}$\;
 $({\bb V},{\bb \Sigma},{\bb W}^h)\gets$ SVD$({\bb A}_0)$ \tcp*{$\bb \Sigma =\text{diag}(\sigma_1,\dots,\sigma_\ell)$}
 $k\gets j,$ where $\sigma_1\geq\dots\geq\sigma_j>\delta>\sigma_{j+1}\approx\dots\approx\sigma_\ell \approx 0 $\;
 \eIf{$k < \ell$}{
 	kfound $\gets$ true\;
 	$\bb V_0 \gets {\bb V}(1\colon m,1\colon k)$\;
 	$\bb W_0 \gets {\bb W}(1\colon \ell,1\colon k)$\;
 	${\bb \Sigma}_0\gets \bb\Sigma$\;}
 	{$\ell\gets \ell+1$\;}
 }
 $\bb A_1 \gets \frac 1{iN}\sum_{j=1}^{N} t_j\bb T( t_j)^{-1}\hat{\bb V}$\;
 $\bb B\gets  \bb V_0^h\bb A_1\bb W_0\bb \Sigma_0^{-1}$\;

 $(\lambda_j)_{j\leq k}\gets {eigs}(B)$\;
\end{algorithm}
\end{algo}

At first the algortihm might seem prohibitively expensive due to the application of a singular value decomposition. However, since the number of eigenvalues $k$ will be small and often a reasonable upper estimate $\ell_0$ of the number of eigenvalues is known such that one can  choose $\ell=\ell_0$.   
Then the complexity of the singular value decomposition becomes negligible in comparison to solving the linear system in lines 5 and 15 of Algorithm~\ref{algo}.

Moreover, in an actual implementation $\bb A_1$ and $\bb A_0$ are assembled simultaneously, since the most expensive operation of the algorithm is evaluating $\bb T(t_j)^{-1}\hat {\bb V}$. In general, $k$ and $\ell$ will be small, so storing both $\bb A_j$ poses no issues. Since smooth contours should be used, exclusively, one can expect the trapezoidal rules for the assembly of the $\bb A_j$ to converge exponentially with respect to $N$. 
Thus, the bottleneck in terms of accuracy of the entire scheme is the accuracy in which $\bb X$ represents the bilinear form of the EFIE.

\begin{remark}[Obtaining the Eigenvectors]
Note that through this algorithm we not only obtain the eigenvalues of Problem \ref{problem::variational}, but also the coefficients of the corresponding eigenfunctions in the form of the matrices $\bb V$ and $\bb W.$ Note moreover that the number of non-zero singular values reflects the number of solutions within $D$ and can be used for verification of an implementation if the number of solutions is known from analytical representations or measurements.
\end{remark}


\section{Numerical Examples}\label{sec::num}

In the following we will discuss some numerical experiments showcasing the application of the contour integral method to the isogeometric boundary element method. The particular implementation used is \emph{Bembel}, which is available open-source \cite{bembel,SoftwareX}. A branch containing the code to recreate all of the presented numerical examples is also available \cite{CodeBranch}.

Since no quasi-optimal preconditioners for the isogeometric discretisation of the  electric field integral equation are known, iterative solvers yield suboptimal runtimes. Thus, the following numerical experiments will use a dense matrix assembly together with the partially pivoted LU decomposition of the linear algebra library Eigen~\cite{eigen3} to solve the arising systems. The utilised higher-order approaches yield systems small enough for this approach to be more than feasible.

\begin{figure}\centering
    \begin{tikzpicture}[scale = .7]
        \begin{axis}[
        height = 7.5cm,
        width=8cm,
        ymode=log,
        xlabel=$h$,
        xtick={0, 1, 2, 3, 4, 5},
        xticklabels={1, {1}/{2}, {1}/{4}, {1}/{8}, {1}/{16}, {1}/{32}},
        ylabel=error,
        legend pos=south west,
        legend columns=2,
        ]
\addplot+[line width=1.5pt,mark size =2.5pt,red] table [trim cells=true,x=M,y=sum_err] {data/cim_gpu1/Acim_cube_1_25};
\addlegendentry{ $p=1$}
\addplot+[line width=1.5pt,mark size =2.5pt,blue] table [trim cells=true,x=M,y=sum_err] {data/cim_gpu1/Acim_cube_2_25};
\addlegendentry{ $p=2$}
\addplot+[line width=1.5pt,mark size =2.5pt,brown] table [trim cells=true,x=M,y=sum_err] {data/cim_gpu1/Acim_cube_3_25};
\addlegendentry{ $p=3$}
\addplot+[line width=1.5pt,mark size =2.5pt,gray] table [trim cells=true,x=M,y=sum_err] {data/cim_gpu1/Acim_cube_4_25};
\addlegendentry{ $p=4$}

\convfitorder{sum_err}{data/cim_gpu1/cim_cube_1_30}{3}{{$\mathcal O(h^3)$}}
\convfitorder{sum_err}{data/cim_gpu1/cim_cube_2_30}{5}{{$\mathcal O(h^5)$}}
\convfitorder{sum_err}{data/cim_gpu1/cim_cube_3_30}{7}{{$\mathcal O(h^7)$}}
\convfitorder{sum_err}{data/cim_gpu1/cim_cube_4_30}{9}{{$\mathcal O(h^9)$}}
        \end{axis}
        \end{tikzpicture}\hfill
        \begin{tikzpicture}[scale = .7]
        \begin{axis}[
        height = 7.5cm,
        width=8cm,
        ymode=log,
        xlabel=$i$,
        xtick = {0,1,2,3,4,5,6,7,8,9},
        xticklabels={1,2,3,4,5,6,7,8,9,10},
        ylabel=$\sigma_i$,
        ]
\addplot+[only marks,line width=1.5pt,mark size =2.5pt,red] table [trim cells=true,x index = {0},y index = {1}] {data/cim_gpu1/Asvd_ev_cim_cube_3_25_0};
\addlegendentry{$h=1$}
\addplot+[only marks,line width=1.5pt,mark size =2.5pt,blue] table [trim cells=true,x index = {0},y index = {1}] {data/cim_gpu1/Asvd_ev_cim_cube_3_25_1};
\addlegendentry{$h=1/2$}
\addplot+[only marks,line width=1.5pt,mark size =2.5pt,brown] table [trim cells=true,x index = {0},y index = {1}] {data/cim_gpu1/Asvd_ev_cim_cube_3_25_2};
\addlegendentry{$h=1/4$}
\addplot+[only marks,line width=1.5pt,mark size =2.5pt,gray] table [trim cells=true,x index = {0},y index = {1}] {data/cim_gpu1/Asvd_ev_cim_cube_3_25_3};
\addlegendentry{$h=1/8$}
        \end{axis}
        \end{tikzpicture}
\caption{On the left the minimal difference of the simultaneously computed first and second eigenvalue $\lambda_\mathrm{cim}$ of the cube to their analytical solution $\lambda_i$ via $\min_{i=0,1}\vert\lambda_\mathrm{cim}-\lambda_i\vert$. On the right the computed singular values for the example with $p=3.$}\label{fig::cim::cube}
\end{figure}
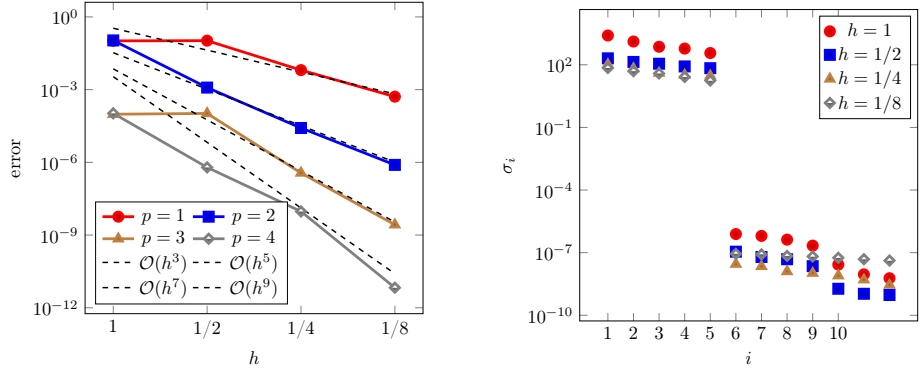

As a first example, we investigate the first two eigenvalues of the unit cube, where an analytical solution is given by the eigenvalues $\lambda_{\mathrm{ana},0}=\pi\sqrt{2}$ of multiplicity three and  $\lambda_{\mathrm{ana},1}=\pi\sqrt{3}$ of multiplicity two. The ellipse was defined as $$\sin(t) + i\cdot0.05\cdot\cos(t)+5.0,\quad\text{ for }t=[0,2\pi),$$ where the contour integrals were evaluated, again, with $N=25.$ The error 
visualised corresponds to 
$$\mathrm{error} = \frac 15\sum_{0\leq j<5}\min_{i=0,1}\vert\lambda_j-\lambda_{\mathrm{ana},i}\vert.$$
As one can see in Figure~\ref{fig::cim::cube}, the multiplicity of the eigenvalues is reflected perfectly by the non-zero singular values, i.e., all eigenvalues have been recognised. Moreover, we have the theoretically obtainable convergence order of ${\mathcal O}(h^{2p+1})$.

\begin{figure}\centering
    \begin{tikzpicture}[scale = .7]
        \begin{axis}[
        height = 7.5cm,
        width=8cm,
        ymode=log,
        xlabel=$h$,
        xtick={0, 1, 2, 3, 4, 5},
        xticklabels={1, {1}/{2}, {1}/{4}, {1}/{8}, {1}/{16}, {1}/{32}},
        ylabel=error,
        legend pos=south west,
        legend columns=2,
        ]
\addplot+[line width=1.5pt,mark size =2.5pt,red] table [trim cells=true,x=M,y=sum_err] {data/cim_gpu2/Acim_sphere_int_1_25};
\addlegendentry{ $p=1$}
\addplot+[line width=1.5pt,mark size =2.5pt,blue] table [trim cells=true,x=M,y=sum_err] {data/cim_gpu2/Acim_sphere_int_2_25};
\addlegendentry{ $p=2$}
\addplot+[line width=1.5pt,mark size =2.5pt,brown] table [trim cells=true,x=M,y=sum_err] {data/cim_gpu2/Acim_sphere_int_3_25};
\addlegendentry{ $p=3$}
\addplot+[line width=1.5pt,mark size =2.5pt,gray] table [trim cells=true,x=M,y=sum_err] {data/cim_gpu2/Acim_sphere_int_4_25};
\addlegendentry{ $p=4$}
\convfitorder{sum_err}{data/cim_gpu2/cim_sphere_int_1_40}{3}{{$\mathcal O(h^3)$}}
\convfitorder{sum_err}{data/cim_gpu2/cim_sphere_int_2_40}{5}{{$\mathcal O(h^5)$}}
\convfitorder{sum_err}{data/cim_gpu2/cim_sphere_int_3_40}{7}{{$\mathcal O(h^7)$}}
\convfitorder{sum_err}{data/cim_gpu2/cim_sphere_int_4_40}{9}{{$\mathcal O(h^9)$}}
        \end{axis}
        \end{tikzpicture}\hfill
        \begin{tikzpicture}[scale = .7]
        \begin{axis}[
        height = 7.5cm,
        width=8cm,
        ymode=log,
        xlabel=$i$,
        xtick = {0,1,2,3,4,5,6,7,8,9},
        xticklabels={1,2,3,4,5,6,7,8,9,10},
        ylabel=$\sigma_i$,
        ]
\addplot+[only marks,line width=1.5pt,mark size =2.5pt,red] table [trim cells=true,x index = {0},y index = {1}] {data/cim_gpu2/Asvd_ev_cim_sphere_int_3_25_0};
\addlegendentry{$h=1$}
\addplot+[only marks,line width=1.5pt,mark size =2.5pt,blue] table [trim cells=true,x index = {0},y index = {1}] {data/cim_gpu2/Asvd_ev_cim_sphere_int_3_25_1};
\addlegendentry{$h=1/2$}
\addplot+[only marks,line width=1.5pt,mark size =2.5pt,brown] table [trim cells=true,x index = {0},y index = {1}] {data/cim_gpu2/Asvd_ev_cim_sphere_int_3_25_2};
\addlegendentry{$h=1/4$}
\addplot+[only marks,line width=1.5pt,mark size =2.5pt,gray] table [trim cells=true,x index = {0},y index = {1}] {data/cim_gpu2/Asvd_ev_cim_sphere_int_3_25_3};
\addlegendentry{$h=1/8$}
        \end{axis}
        \end{tikzpicture}
\caption{On the left the mean absolute difference of the computed first eigenvalue of the sphere to the analytical solution. On the right the computed singular values for the example with $p=3.$}\label{fig::cim::sphere}
\end{figure}
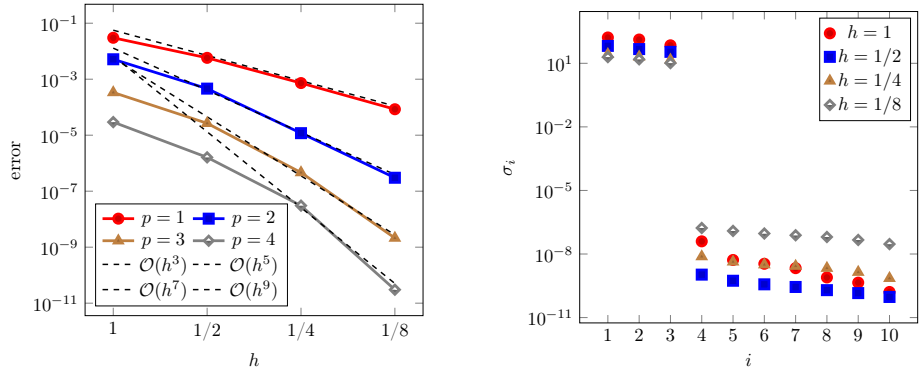

\begin{figure}
\begin{tikzpicture}[scale=.75]
\begin{axis}[
width=.65\textwidth,
xlabel = real part,
ylabel = imaginary part,
]
\addplot[red,only marks,mark = o,very thick,mark options={solid},mark size=3pt] coordinates {
    (2.743707269992269, 0)
};
\addplot[blue,only marks,mark = x,very thick,mark options={solid}] coordinates {
    (2.7437123, 0)
};
\addplot[dashed,mark = o,mark options={solid},gray] coordinates {

( 2.984292e+00, 1.243449e-02)
( 2.938153e+00, 2.408768e-02)
( 2.864484e+00, 3.422736e-02)
( 2.767913e+00, 4.221640e-02)
( 2.654508e+00, 4.755283e-02)
( 2.531395e+00, 4.990134e-02)
( 2.406309e+00, 4.911436e-02)
( 2.287110e+00, 4.524135e-02)
( 2.181288e+00, 3.852566e-02)
( 2.095492e+00, 2.938926e-02)
( 2.035112e+00, 1.840623e-02)
( 2.003943e+00, 6.266662e-03)
( 2.003943e+00, -6.266662e-03)
( 2.035112e+00, -1.840623e-02)
( 2.095492e+00, -2.938926e-02)
( 2.181288e+00, -3.852566e-02)
( 2.287110e+00, -4.524135e-02)
( 2.406309e+00, -4.911436e-02)
( 2.531395e+00, -4.990134e-02)
( 2.654508e+00, -4.755283e-02)
( 2.767913e+00, -4.221640e-02)
( 2.864484e+00, -3.422736e-02)
( 2.938153e+00, -2.408768e-02)
( 2.984292e+00, -1.243449e-02)
( 3, -1.224647e-17)

};
\end{axis}
\end{tikzpicture}
\hfill
\begin{tikzpicture}[scale=.75]
\begin{axis}[
width=.65\textwidth,
xlabel = real part,
ylabel = imaginary part,
]
\addplot[red,only marks,mark = o,very thick,mark options={solid}, mark size=3pt] coordinates {
    (4.44288293815836, 0)
    (5.44139809270265, 0)
};
\addlegendentry{Analytical eigenvalue};
\addplot[blue,only marks,mark = x,very thick,mark options={solid}] coordinates {
    (4.442882, 0)
    (5.441338, 0)
};
\addlegendentry{Computed eigenvalue};
\addplot[gray,dashed,mark = o,mark options={solid}] coordinates {

( 5.968583e+00, 1.243449e-02)
( 5.876307e+00, 2.408768e-02)
( 5.728969e+00, 3.422736e-02)
( 5.535827e+00, 4.221640e-02)
( 5.309017e+00, 4.755283e-02)
( 5.062791e+00, 4.990134e-02)
( 4.812619e+00, 4.911436e-02)
( 4.574221e+00, 4.524135e-02)
( 4.362576e+00, 3.852566e-02)
( 4.190983e+00, 2.938926e-02)
( 4.070224e+00, 1.840623e-02)
( 4.007885e+00, 6.266662e-03)
( 4.007885e+00, -6.266662e-03)
( 4.070224e+00, -1.840623e-02)
( 4.190983e+00, -2.938926e-02)
( 4.362576e+00, -3.852566e-02)
( 4.574221e+00, -4.524135e-02)
( 4.812619e+00, -4.911436e-02)
( 5.062791e+00, -4.990134e-02)
( 5.309017e+00, -4.755283e-02)
( 5.535827e+00, -4.221640e-02)
( 5.728969e+00, -3.422736e-02)
( 5.876307e+00, -2.408768e-02)
( 5.968583e+00, -1.243449e-02)
( 6, -1.224647e-17)

};
\addlegendentry{Quad. pts. $t_j$}
\end{axis}
\end{tikzpicture}
\caption{The setup for the contour integral method. Sphere (left) and cube (right), both computed with $N=25$, $p=2$, $h=1/4$.
}\label{fig::cim::setup}

\end{figure}
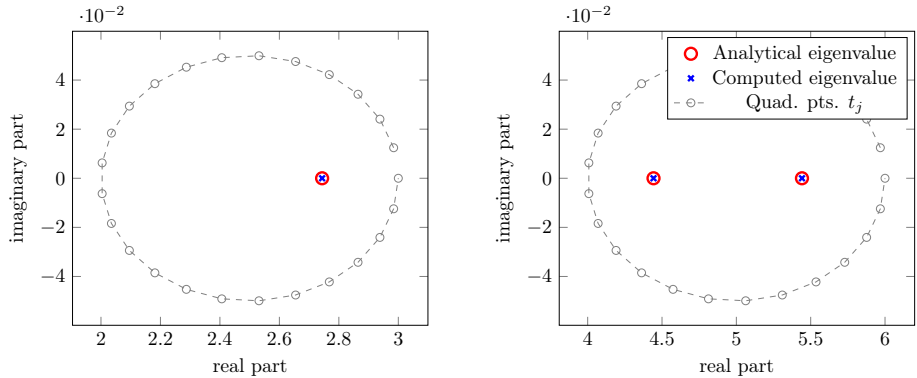

\subsection{Comparison to IGA-FEM and Commercial Tools}

As a second example, we compute the first eigenvalue for the sphere, cf.~Figure~\ref{fig::cim::sphere} for the results. 
The contour was chosen as the curve $$ 0.5\cdot\sin(t) + i\cdot0.05\cdot\cos(t)+2.5,\quad\text{ for }t=[0,2\pi).$$ The trapezoidal rule for lines 5 and 15 of the algorithm was chosen with $N=25$. 
A closed-form solution $\lambda_{\mathrm{ana}}$ is known \cite[Sec.~5.1.2]{UngerPreprint} and given as the first root of the spherical Bessel function of the first kind, cf.~Figure~\ref{fig::cim::setup}.
It is an eigenvalue of multiplicity three, thus three non-zero singular values in $\bb \Sigma$ are expected. This behaviour is reflected by the numerical examples perfectly. The error in Figure \ref{fig::cim::sphere} refers to the average error of all three computed eigenvalues, i.e., 
$$ \mathrm{error} = \frac 13\sum_{0\leq j < 3} \vert \lambda_j - \lambda_{\mathrm{ana}}\vert.$$
The convergence behaviour w.r.t.~$h$ matches the orders ${\mathcal O}(h^{2p+1})$ predicted by~\eqref{ErrorEstimate3} once more.

We have also computed approximations of the smallest eigenvalues of the unit sphere with the volume-based IGA software package  \emph{GeoPDEs 3.0}~\cite{geopde3,Falco_2011aa}. In Table~\ref{Tab:FEMBEM} we showcase results for different polynomial degrees and refinements the number of degrees of freedom and the reached accuracy  for the IGA-BEM and the volume-based IGA. One can observe that  for the IGA-BEM a 
notably fewer (at least $20\times$) number of degrees of freedom are necessary for each polynomial degree in order to reach the same accuracy as  for the volume-based IGA. 
The difference to commercial tools, in this case CST Microwave Studio 2018, is even more pronounced, cf.~Table~\ref{Tab:CST}.
However, the matrices for the IGA-BEM are dense and the eigenvalue problem non-linear.

\begin{table}
\begin{footnotesize}\begin{center}
\begin{tabular}{r r r r| r r r r }
\toprule
\multicolumn{4}{c}{IGA-BEM}&\multicolumn{4}{c}{volume-based IGA
} \\
$p$ & $m$ & DOFs & error & $p$ & SD & DOFs & error\\
\midrule
3 & 1  &192  & 2.66e-05         &3  &4  &4350   &3.345e-05\\
3 & 2  &432  &4.63e-07          &3  &8  &20450  &3.45e-07\\
3 & 3  &1200 &2.12e-09          &3  &14 & 84560  &1.10e-08\\
\midrule
4  &1  &300  &1.62e-06          &4  &4  &6944   &2.36e-06\\
4  &2  &588  &3.05e-08          &4  &8  &27280  &3.94e-09\\
4  &3  &1452 &3.89e-11          &4  &13 &84560  &7.88e-11\\
\midrule        
5  &1  &432  &8.97e-08          &5  &4   &10408 &1.88e-07\\
5  &2  &768  &2.04e-09          &5  &8   &35484 &7.33e-11\\
5  &3  &1728 &7.17e-12          &5  &11  &69600 &2.1498e-12\\
\bottomrule
\end{tabular}
\end{center}\end{footnotesize}
\caption{Error and degrees of freedom (DOFs) for the approximation of the smallest eigenvalue of the sphere for different polynomial degrees $p$, refinement levels $m$  and number of subdivisions~(SD).}
\label{Tab:FEMBEM} 
\end{table}

\begin{table}
\begin{footnotesize}\begin{center}
\begin{tabular}{r r r r}
\toprule
\multicolumn{4}{c}{CST Microwave Studio 2018}\\
elements & SD & DOFs & error \\
\midrule
1089 & 6&21597 & 1.8638e-07\\
5292 & 10&101997 & 5.4204e-09\\
9191 & 12&175818 & 1.9524e-09\\
\bottomrule
\end{tabular}
\end{center}\end{footnotesize}
\caption{Error and degrees of freedom (DOFs) for the approximation of the smallest eigenvalue of the sphere with CST Microwave Studio 2018. Settings are 
Mesh: Tetrahedral, Desired accuracy: 1e-12, Solver order: 3rd (constant), Curved elements: up to order 5 (user defined)}
\label{Tab:CST} 
\end{table}

\subsection{An Industrial Application: TESLA Cavity}

As a third example we discuss an eigenvalue computation of the one-cell TESLA geometry as shown in Figures~\ref{fig::intro::tesla1} and \ref{fig::intro::tesla2}, with the results depicted in Figure~\ref{fig::cim::tesla1}. For this example no analytical solution is known, but experience dictates a resonant frequency around $\kappa \approx 26.5$. We choose the contour $$0.5\cdot\sin(t) + i\cdot0.01\cdot\cos(t)+26.5,\quad\text{ for }t=[0,2\pi)$$ and $N=8.$ As a reference solution we utilise the result of a computation with $p=5$ and $h=1/8.$ 
This reference solution was compared to a computation with CST Microwave Studio 2018. The set solver order was 3rd (constant) and the mesh was generated with 
200\,771 curved tetrahedral elements of order 5. CST yields the solution of 1.27666401260 GHz. Our reference solution of $\lambda_{\mathrm{cim}} = 26.75690023$ corresponds to a frequency of 1.276664064 GHz. This results in a relative error of 4.018e-08. Thus our experiments are in good agreement with those of the CST software. {However, note that in Figure \ref{fig::cim::tesla1} one can clearly see stagnation w.r.t.~the CST Solution on higher-order computations and small $h$. This suggests that the Bembel reference solution is more accurate, provided the convergence of the contour integral approach behaves as observed in the previous numerical experiments.}

The order of convergence for the cavity is not as pronounced as in the examples with analytical solution, which, in light of the estimate~\eqref{ErrorEstimate3} and Remark~\ref{rem::nonsmoothGeom} is most likely due to the reduced regularity of the corresponding eigenfunction.
Either way, one can clearly see an increased accuracy in higher-order approaches.

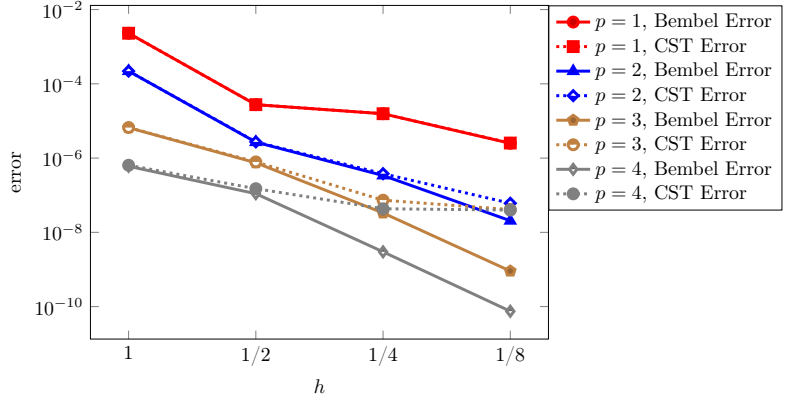
\begin{figure}[t!]\centering
    \begin{tikzpicture}[scale = .75]
        \begin{axis}[
        height = 7.5cm,
        ymode=log,
        xlabel=$h$,
        xtick={0, 1, 2, 3, 4, 5},
        xticklabels={1, {1}/{2}, {1}/{4}, {1}/{8}, {1}/{16}, {1}/{32}},
        ylabel=error,
        legend style={at={(1,1)},anchor=north west,legend columns=1,cells={anchor=west},}
        ]
        \addplot+[line width=1.5pt,mark size =2.5pt,red] table [trim cells=true,x=M,y=est_err] {data/1cell/cim_tesla_1_cell_1_22_gerhard};
        \addlegendentry{ $p=1$, Bembel Error}
        \addplot+[line width=1.5pt,mark size =2.5pt,mark options = {solid}, dotted,red] table [trim cells=true,x=M,y=cst_err] {data/1cell/cim_tesla_1_cell_1_22_gerhard};
        \addlegendentry{ $p=1$, CST Error}
        \addplot+[line width=1.5pt,mark size =2.5pt,blue] table [trim cells=true,x=M,y=est_err] {data/1cell/cim_tesla_1_cell_2_22_gerhard};
        \addlegendentry{ $p=2$, Bembel Error}
        \addplot+[line width=1.5pt,mark size =2.5pt,mark options = {solid}, dotted,blue] table [trim cells=true,x=M,y=cst_err] {data/1cell/cim_tesla_1_cell_2_22_gerhard};
        \addlegendentry{ $p=2$, CST Error}
        \addplot+[line width=1.5pt,mark size =2.5pt,brown] table [trim cells=true,x=M,y=est_err] {data/1cell/cim_tesla_1_cell_3_22_gerhard};
        \addlegendentry{ $p=3$, Bembel Error}
        \addplot+[line width=1.5pt,mark size =2.5pt,mark options = {solid}, dotted,brown] table [trim cells=true,x=M,y=cst_err] {data/1cell/cim_tesla_1_cell_3_22_gerhard};
        \addlegendentry{ $p=3$, CST Error}
        \addplot+[line width=1.5pt,mark size =2.5pt,gray] table [trim cells=true,x=M,y=est_err] {data/1cell/cim_tesla_1_cell_4_22_gerhard};
        \addlegendentry{ $p=4$, Bembel Error}
        \addplot+[line width=1.5pt,mark size =2.5pt,mark options = {solid}, dotted,gray] table [trim cells=true,x=M,y=cst_err] {data/1cell/cim_tesla_1_cell_4_22_gerhard};
        \addlegendentry{ $p=4$, CST Error}
        \end{axis}
        \end{tikzpicture}
        \caption{Eigenvalue problem of the one-cell TESLA cavity solved for the first eigenvalue. Error has to be understood as the relative difference to the result of a computation. Bembel Reference refers to the error of our method to a reference computation with $h=1/8$ and $p=5$ of our own implementation, and the CST Reference to the error w.r.t.~a reference solution obtained via CST Microwave Studio 2018.}\label{fig::cim::tesla1}
\end{figure}
\section{Conclusion}\label{sec::con}

Overall, the contour integral method yields exceptional accuracies in conjunction with our isogeometric boundary element method.
Judging from the asymptotic behaviour predicted in \eqref{ErrorEstimate3} and observed in the numerical examples, the accuracy of the combination of isogeometric boundary element method and the contour integral method promises higher orders of convergence to the correct solution compared to currently implemented volume based approaches, since these will not benefit from the convergence order of ${\mathcal O}(h^{2p+1})$. 
This means that, for the same computational resources, the maximum reachable accuracy of the IGA-BEM with contour integration is higher compared to many volume-based methods. However, due to the long time spend solving the systems, this means that the IGA-BEM with contour integral offers this higher accuracy only in exchange for runtime, until efficient preconditioning strategies are available, thus making the utilisation of fast methods and iterative solvers viable for this type of application. 

\begin{quote}
\footnotesize
{\textbf{Acknowledgement:} The work of Gerhard Unger is supported by 
the Austrian Science Fund (FWF), project P31264.
The work of Felix Wolf is supported by DFG Grants SCHO1562/3-1 and KU1553/4-1, the Excellence
Initiative of the German Federal and State Governments and the Graduate School of Computational Engineering at TU Darmstadt.}
\end{quote}


\end{document}